
\documentclass[twocolumn,10pt]{asme2ej}

\usepackage{graphicx} 
\usepackage{hyperref}   
\hypersetup{
	colorlinks=true,
	linkcolor=blue,
	citecolor=blue,
	urlcolor=blue,
}
\usepackage[square,numbers]{natbib}

\usepackage{amssymb}  

\usepackage{amsthm}
\usepackage{color,soul}
\usepackage{flushend}
\usepackage{mathtools}
\usepackage{xcolor}
\usepackage{array,multirow}
\usepackage[ruled,linesnumbered]{algorithm2e}

\usepackage{subfig}

\newtheorem{thmm}{Theorem}
\newtheorem{propp}{Proposition}
\newtheorem{remm}{Remark}
\newtheorem{assm}{Assumption}
\newtheorem{deff}{Definition}

\newtheorem{Cor}{Corollary}

%

\title{On Distinguishability of Anomalies as Physical Faults or Actuation Cyberattacks}

\author{Tanushree Roy\thanks{Address all correspondence related to this paper to Tanushree Roy.}
    \affiliation{
 Department
of Mechanical Engineering \\Texas Tech University\\ 2500 Broadway Lubbock, Texas 79409\\ E-mail: tanushree.roy@ttu.edu. 
    }	
}

\author{Satadru Dey
    \affiliation{Department
of Mechanical Engineering\\ The Pennsylvania State University\\ University Park, Pennsylvania 16802, USA\\E-mail: skd5685@psu.edu.
    }
}


\begin{document}

\maketitle    

\begin{abstract}
{\it Increased automation has created an impetus to integrate infrastructure with wide-spread connectivity in order to improve efficiency, sustainability, autonomy, and security. Nonetheless, this reliance on connectivity and the inevitability of complexity in this system increases the vulnerabilities to physical faults or degradation and external cyber-threats.  However, strategies to counteract faults and cyberattacks would be widely different and thus it is vital to not only detect but also to identify the nature of the anomaly that is present in these systems. In this work, we propose a mathematical framework to distinguish between physical faults and cyberattack using a sliding mode based  unknown input observer. Finally, we present simulation case studies to distinguish between physical faults and cyberattacks using the proposed Distinguishability metric and criterion. The simulation results show that the proposed framework successfully distinguishes between faults and cyberattacks.
}
\end{abstract}



\section{Introduction}

The growing need for efficiency, coordination, precision, and autonomy has led to the integration of cybernetic components with physical infrastructure through Information and Communication Technologies (ICT). Such physical systems with embedded networks of sensors, actuators, controllers are commonly described as Cyber-physical systems (CPS). Currently, such CPS has garnered a lot of interest in the areas of smart grid \cite{he2016cyber}, manufacturing \cite{zheng2018smart}, mobility \cite{rawat2015towards} and many others. 
Thus, for reliable operation of these safety-critical systems, ensuring safety and security of these systems against faults and cyberattacks has become obligatory. 

\subsection{Motivation}

The impact of fault and cyberattack on CPS may be disparate \cite{teixeira2015secure}. On one hand, faults may arise due to natural degradation of system components or physical abuse. On the other, cyberattack is specifically crafted by an adversary to drive system towards unintended states while evading detection by the system administrator. The wide-range of possibilities for physical faults and cyberattacks also make it challenging to distinguish between them from system measurements. Particularly, faults can be incipient or rapidly evolving leading to runway effects \cite{safaeipour2021survey}. In contrast, some cyberattacks can be passive (such as eavesdropping attack) or stealthy or can deny services from the system altogether. Additionally, the adversary can also design cyberattacks such that it can mimic behavior of faults in systems \cite{rahman2014cyber} or coordinate series of multiple faults in the systems \cite{slay2007lessons}.  
More importantly, if faults and cyberattacks are wrongly classified, they may lead to incorrect remedial actions and eventually cause severe disruptions. 

\subsection{Literature review}

 Even though detection and isolation of both faults and cyberattacks have been a field of active research over the last decade, efforts to distinguish them has remained under-explored.
In a distributed sensor network, Hidden Markov Models (HMM) have been used to distinguish between faulty and malicious data \cite{basile2006approach}. On the other hand, in \cite{rahman2014cyber} cyberattack which maliciously trip relays to disrupt power distribution has been distinguished from faults by observing the flow of fault current in the power grid. The first effort towards formalizing attack policies began with the introduction of an attack-space representation  with respect to adversary's system knowledge, disclosure and disruption resources \cite{teixeira2015secure}. This work provides replay, zero-dynamics and bias-injection injection attack policies. They also present stealthy bias-injection attack policy under incomplete system knowledge. Under a multi-agent scenario, \cite{li2019anomaly} proposes a $H_{\infty}$ optimization based observer design that distinguishes between in-domain fault and false-data injection attacks to the sensor measurement. The formulation considered in this work is restrictive in the sense that cyberattacks only affect pair-wise agents while the faults affects all the agents in the system. In contrast, \cite{rahman2016multi} uses both physical and cyber properties of a mutli-agent system (specifically smart grid) in order to achieve the same. Additionally, data-driven strategies to distinguish between fault and cyberattack has been tackled in the context of smart grids in \cite{patil2019machine,farajzadeh2021adversarial}  and for smart buildings in  \cite{anwar2015data, tertytchny2020classifying}. Lastly, \cite{bernieri2017monitoring} utilizes both model-based detection strategies along with information technology solutions to achieve the same.

\subsection{Research gap and contribution}

Literature in fault diagnostics and cyber-security reveal that a mathematical framework for distinguishing fault and cyberattack for linear system has not been proposed, to the best of our knowledge. 
Thus, to address this gap we use a sliding mode observer to estimate an anomalous input to the system and provide a criterion to distinguish whether the anomalous input is a fault or a cyberattack. 

\subsection{Organization of the chapter}
The rest of the chapter is organized as follows: Section 6.2 describes the problem set-up, Section 6.3 presents the distinguishability criterion,  Section 6.4 shows the validation of our framework through simulation studies for fault and cyberattack scenarios and finally in Section 6.6 we present the concluding remarks.

\textbf{Notations:} The  following  notations  has  been  used  in  this work: 
$\mathbf{I}_n$ is an identity matrix of size $n$, $\mathcal{R}(M)$ represents the range of matrix $M$, $B^\dagger$ represents the generalized inverse of matrix $B$, $\|\eta\|$ represents the Euclidean norm of the vector $\eta$, $\|P\|_F$ represents the  Frobenius norm of a matrix $P$.

\section{Problem Set-up}

\subsection{Cyber-physical system model}


Cyber-physical systems comprise of 6 layers: physical layer, control layer, communication layer, network layer, supervisory layer and management layer \cite{zhu2011basar}. The physical layer here represents the physical plant, sensors and actuators.
The network and communication layers contains the ICT and provides interconnections between the  physical layer, control layer and supervisory-management layers. Now, the control layer consists of the Control Module (CM) which contains controllers and state estimators. 
In contrast, the supervisory layer and management layer consists of the Central Management System and has the following role: (i) to provide  high-level supervisory management in terms of operating condition commands to the CM, (ii)  to diagnose the integrity of the plant operation using a Diagnostic Filter (DF) and (iii) to distinguish between physical faults and cyberattacks utilizing a Distinguisher Module.

\noindent
\newline
\textbf{Physical plant}
\vspace{0.1in}

Subsequently, let us consider the following linear time-invariant state space model for the physical plant $\mathcal{S}$
\begin{align}\label{sys_eta}
      \mathcal{S}: \quad 
    &\dot{x}= Ax+Bu+\eta,
    &y=x,
\end{align}
where $x\in\mathbb{R}^n$ represents the states of the system; $u\in\mathbb{R}^p$ represents the control input obtained from CM;  $A\in\mathbb{R}^{n\times n}$ represents the state matrix; \mbox{$B\in\mathbb{R}^{n\times p}$} is the actuation distribution matrix;  $\eta(t) \in \mathbb{R}^n$ represents any unknown input. In this work, we have assumed full state feedback and thus the measurement $y=x$.

Notably, under a cyber-physical setting, the adversary is able to manipulate the actuation channel of the system to launch a cyberattack $\alpha\in\mathbb{R}^p$. Consequently if a cyberattack is inflicted upon the system, the unknown input $\eta = B\alpha$. On the other hand, if there is a fault $f\in \mathbb{R}^{\Tilde{q}}$ in the system, then the unknown input $\eta = Ef$, where $E\in \mathbb{R}^{n\times\Tilde{q}}$ represents the fault distribution matrix. It is to be noted that such distribution matrix $E$ can be reliably obtained using Failure Mode and Effect Analysis (FMEA) and strategies for uncertainty quantification \cite{fmea}. We present the schematic of our problem framework in Fig.~\ref{fig:CPS_FvsC}.

 \begin{figure}[h!]
    \centering
    \includegraphics[width=0.3\textwidth]{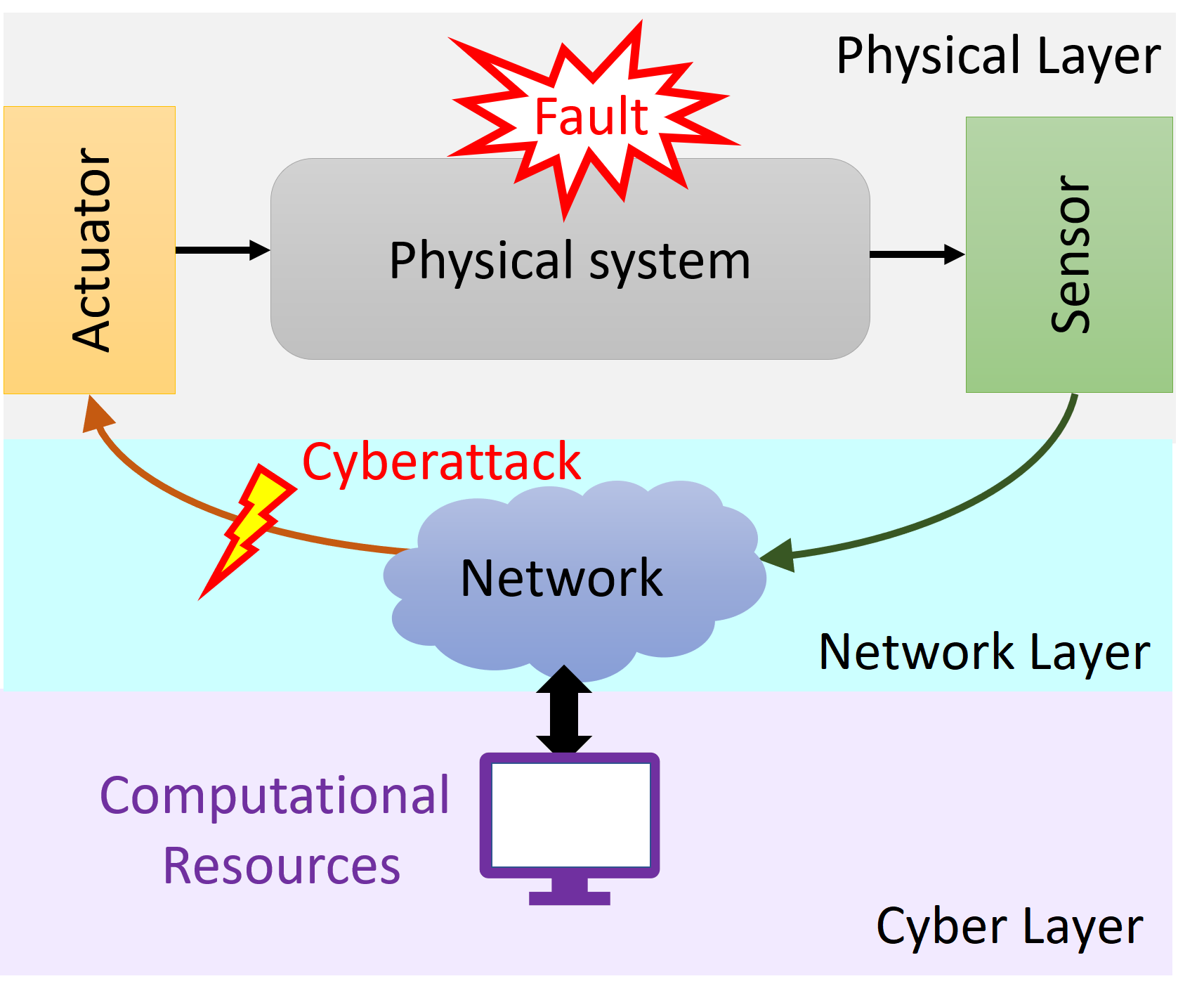}
    \caption{ The schematic diagram of the problem framework}
    \label{fig:CPS_FvsC}
\end{figure}

\begin{assm}\label{single}
In this framework, we assume that these unknown inputs are injected to the system either as cyberattacks or faults but never both simultaneously. Moreover, we assume here that the sensor measurement is not concurrently corrupted by an adversary. Such assumptions ensure that the injected cyberattack is not covert, which can evade detection \cite{teixeira2015secure}. We also assume here that the unknown input is bounded such that
\begin{align}\label{eta_max}
    \|\eta(t)\|<M<\infty, \quad  \text{for} \quad t\in [0, T_{max}], T_{max}<\infty.
\end{align}
\end{assm}
\noindent Under Assumption \ref{single}, the \textbf{system under fault} becomes:
\begin{align}\label{sys_fault}
    \mathcal{S}:\quad&\dot{x}= Ax+Bu+Ef,\quad y=x.
\end{align}
Similarly, \textbf{system under cyberattack} becomes:
\begin{align}\label{sys}
    \mathcal{S}: \quad
    &\dot{x}= Ax+B(u+\alpha),\quad y=x.
\end{align}

In the next section, we propose a sliding mode-based diagnostic filter that detects and estimates unknown inputs defined in Eqn.~\eqref{sys_eta}.

\noindent
\newline
\textbf{Diagnostic Filter (DF)}
\vspace{0.1in}

In this formulation, the objective of the sliding mode-based Diagnostic Filter (DF) is to detect and estimate unknown inputs (such as cyberattacks and faults) to the system. The structure of filter considered here is based on measurement feedback from system \cite{edwards2000sliding, utkin2017sliding} and the unknown input $\eta$ is estimated using an equivalent output error injection term.

Let us first present structure of the sliding mode-based DF as
\begin{align}
    \label{SMDF_eq}
    \mathcal{DF}: \quad 
    &\dot{\hat{x}}= A\hat{x}+Bu+ L \frac{y-\hat{y}}{\|y-\hat{y}\|},
    &\hat{y}=\hat{x},
\end{align}
where  $L\in \mathbb{R}^{n\times n}$ is the filter gain. Next, let us know define the error state as $e := x-\hat{x}$. The error dynamics is then given by
\begin{align}\label{error_full}
    &\dot{e}= Ae +\eta - L \frac{x-\hat{x}}{\|x-\hat{x}\|}.
\end{align}

\begin{propp}[Convergence of sliding mode-based Diagnostic Filter]\label{SMDF}
Consider the system given by Eqn.~\eqref{sys_eta} and the sliding mode-based DF given by Eqn.~\eqref{SMDF_eq}. If there exists positive definite matrices $P\in\mathbb{R}^{n\times n}$ and $Q\in\mathbb{R}^{n\times n}$, and constant $\gamma>0$ such that 
\begin{align}\label{P}
   A^TP+PA \leq -Q, \quad \text{where }  \text{and} \quad \gamma > \|P\|_FM,
\end{align}
where $M$ is obtained from Eqn.~\eqref{eta_max} and we choose filter gain $L$ such that
    $L =  \gamma P^{-1},$
then the estimate for the unknown input vector $\eta$ is given by 
\begin{align}\label{eta_est}
    \hat{\eta} = \mathcal{F}\left(L\frac{y-\hat{y}}{\|y-\hat{y}\|}\right),
\end{align}
where $\mathcal{F}(.)$ is a low pass filter function and $\eta\to \hat{\eta}$ in finite time.
\end{propp}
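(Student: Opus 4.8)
The plan is to establish finite-time convergence of the error $e$ onto the sliding surface $e=0$ via a quadratic Lyapunov argument, and then to recover $\eta$ through the equivalent output-injection concept. First I would take the Lyapunov candidate $V = e^{T}Pe$ with $P$ the positive definite matrix from Eqn.~\eqref{P}, and differentiate it along the error dynamics Eqn.~\eqref{error_full}, giving $\dot V = e^{T}(A^{T}P+PA)e + 2e^{T}P\eta - 2e^{T}PL\,e/\|e\|$. Substituting the prescribed gain $L=\gamma P^{-1}$ makes the discontinuous term collapse cleanly, since $e^{T}PLe = \gamma\,e^{T}P P^{-1} e = \gamma\|e\|^{2}$, so that $-2e^{T}PLe/\|e\| = -2\gamma\|e\|$. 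Bounding the quadratic term with the Lyapunov inequality $A^{T}P+PA\leq -Q$, and the cross term by Cauchy--Schwarz together with $\|P\eta\|\leq\|P\|_{F}\|\eta\| < \|P\|_{F}M$ (valid on $[0,T_{\max}]$ by Eqn.~\eqref{eta_max}), I expect to arrive at
\begin{align*}
\dot V \leq -\lambda_{\min}(Q)\|e\|^{2} - 2\big(\gamma-\|P\|_{F}M\big)\|e\|.
\end{align*}

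The hypothesis $\gamma>\|P\|_{F}M$ renders the bracket strictly positive, so the second term alone yields $\dot V \leq -2\big(\gamma-\|P\|_{F}M\big)\|e\|$. Relating $\|e\|$ to $V$ through $V\leq\lambda_{\max}(P)\|e\|^{2}$ then converts this into a bound of the form $\dot V \leq -c\sqrt{V}$ with $c>0$, which is the standard $\eta$-reachability condition. Integrating this differential inequality shows that $\sqrt{V(t)}$ decays at least linearly and hits zero in finite time, independently of the initial error, so the trajectory is driven onto and held on the surface $e=0$ (here the sliding surface coincides with the full error because the measurement is full-state, $y=x$).

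The last step is to turn reachability of $e=0$ into the estimate of Eqn.~\eqref{eta_est}. Once motion is confined to the surface, $e\equiv 0$ and $\dot e\equiv 0$, so Eqn.~\eqref{error_full} forces the injection term to balance the unknown input, i.e.\ its equivalent (averaged) value obeys $\big[L(y-\hat y)/\|y-\hat y\|\big]_{eq}=\eta$. The main obstacle is making this equivalent-injection step precise: the term $Le/\|e\|$ is not defined at $e=0$, so the identity holds only in the Filippov/averaged sense, and physically it is the low-pass filter $\mathcal{F}(\cdot)$ that extracts this slow equivalent component while discarding the high-frequency switching content. I would therefore argue that filtering recovers $\hat\eta = \mathcal{F}\big(L(y-\hat y)/\|y-\hat y\|\big)\to\eta$, the approximation becoming exact in the ideal sliding regime (with a bandwidth-versus-chattering tradeoff governing the residual error of $\mathcal{F}$ in practice).
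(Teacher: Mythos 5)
Your proposal is correct and follows essentially the same route as the paper: the quadratic Lyapunov function $V=e^TPe$, the collapse of the injection term to $-2\gamma\|e\|$ under $L=\gamma P^{-1}$, the reachability bound $\dot V\leq -c\sqrt{V}$ from $\gamma>\|P\|_FM$, and recovery of $\eta$ via the equivalent output injection passed through the low-pass filter. One small point in your favor: you correctly invoke $V\leq\lambda_{\max}(P)\|e\|^2$ (so $\|e\|\geq\sqrt{V/\lambda_{\max}(P)}$) to convert $-\|e\|$ into $-\sqrt{V}$, whereas the paper's proof uses the $\lambda_{\min}(P)$ side of the Rayleigh inequality, which points the wrong way for that step.
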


\begin{proof}
Let us define a Lyapunov functional
    $V(e) =  e^T P e,$
where $P$ is determined by Eqn.~\eqref{P}. Taking time derivative of $V(e)$ and using Eqn.~\eqref{error_full}, we obtain
\begin{align}
    \dot{V} = e^T(A^TP+PA)e + 2e^TP\eta - 2e^TPL \frac{e}{\|e\|}.
\end{align}
Now choosing filter gain $L=\gamma P^{-1}$ and using Eqn.~\eqref{P}, we obtain
\begin{align}
    \dot{V} \leqslant -e^TQe+ 2\|e\|\|P\|_FM- 2\gamma \|e\|.\label{eqxx}
\end{align}
Since $-e^TQe<0$ due to positive definiteness of $Q$, we can write Eqn.~\eqref{eqxx} as $
    \dot{V} \leqslant 2\|e\|\|P\|_FM- 2\gamma \|e\|.\label{eqxx2}
$
Considering the fact that $\gamma > \|P\|_FM$, we can write $\dot{V} \leqslant -2\beta_{+}\|e\|$ where $\beta_{+}=(-\|P\|_FM+\gamma)>0$. From this, and considering $V \geq \lambda_{min}(P)\|e\|^2 \implies \|e\| \leq \sqrt{V/\lambda_{min}(P)}$, we can further write $\dot{V} \leqslant -\beta \sqrt{V}$ where $\beta = 2\frac{\beta_{+}}{\sqrt{\lambda_{min}(P)}}>0$. This implies that $e\to 0$ as $t\to T_{max}$ where $T_{max}<\infty$ is a finite time \cite{utkin2017sliding, bhat2000finite}. 

Consequently, after $t>T_{max}$, Eqn.~\eqref{error_full} becomes 
\begin{align}\label{error_full22}
    &0= \eta - L \frac{x-\hat{x}}{\|x-\hat{x}\|},
\end{align}
and yields the equivalent output error dynamics \cite{edwards2000sliding, utkin2017sliding}. Subsequently, we can use Eqn.~\eqref{error_full22} to obtain an estimate of the unknown input by passing the output error injection term through a low pass filter $\mathcal{F}(.)$ \cite{edwards2000sliding, utkin2017sliding}. In this case, the low pass filter is chosen as
$
    \dot{\hat{\eta}} = -\frac{1}{\tau} \hat{\eta} +\frac{1}{\tau}\left(L\frac{y-\hat{y}}{\|y-\hat{y}\|}\right),
$
where $\tau$ is the filtering time constant.
\end{proof}


\section{Distinguishability}


Using the sliding mode-based DF from Proposition~\ref{SMDF}, we use the estimation of the unknown input $\eta(t)$ for distinguishing between cyberattack and fault. Now, if $\eta$ is a cyberattack, then $\eta=B\alpha$ i.e. $\mathcal{R}(B)$ represents the plausible set of cyberattacks. Similarly, $\mathcal{R}(E)$ represents the  plausible set of faults. Therefore, the question of distinguishability translates to identifying if $\eta$ lies in $\mathcal{R}(B)$ or $\mathcal{R}(E)$. With this intent, we define the following \textit{Distinguishability metric}.

\begin{deff}[Distinguishability metric]
For an unknown input $\eta\not\equiv 0$, we define a Distinguishability metric given by
functional: 
\begin{align}\label{M_x}\nonumber
    \mathcal{M}(\eta) :=&
    \Big\| \big(\mathbf{I}_n - B(B^TB)^{-1}B^T \big)\eta\Big\|_2 \\
   &\hspace{2em} - \Big\| \big(\mathbf{I}_n - E(E^TE)^{-1}E^T\big)\eta\Big\|_2.
\end{align}
\end{deff}


In the above functional definition, $\Big\| \big(\mathbf{I}_n - B(B^TB)^{-1}B^T \big)\eta\Big\|_2$ denotes the minimum distance of $\eta(t)$ from $\mathcal{R}(B)$, while $\Big\| \big(\mathbf{I}_n - E(E^TE)^{-1}E^T\big)\eta\Big\|_2$ denotes the minimum distance of $\eta(t)$ from $\mathcal{R}(E)$. Hence, $\mathcal{M}$ represents how closer or further $\eta$ is to the range space of $B$ or $E$. Evidently, positive $\mathcal{M}$ implies that the distance of $\eta$ to $\mathcal{R}(B)$ is more than the distance of $\eta$ to $\mathcal{R}(E)$. Similarly, if $\mathcal{M}$ is negative, it implies $\eta$ is closer to the $\mathcal{R}(B)$. Thus $\mathcal{M}=0$ for a non-zero $\eta$ implies that the unknown input is equidistant from both $\mathcal{R}(B)$ and $\mathcal{R}(E)$. 

\begin{figure}[!t]
\centering
\subfloat[]{\includegraphics[width=0.25\textwidth]{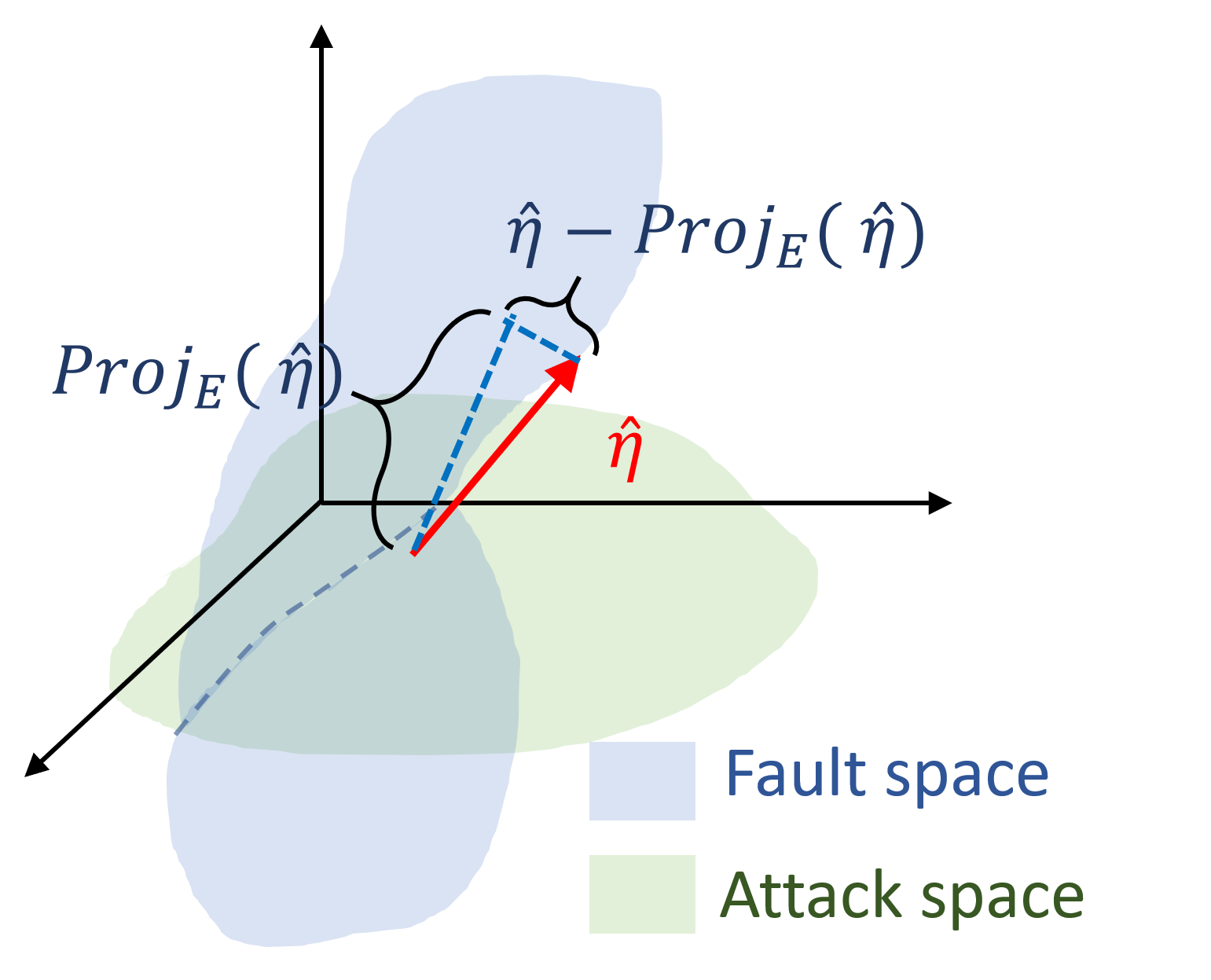}%
\label{fig:f_space}}
\subfloat[]{\includegraphics[width=0.25\textwidth]{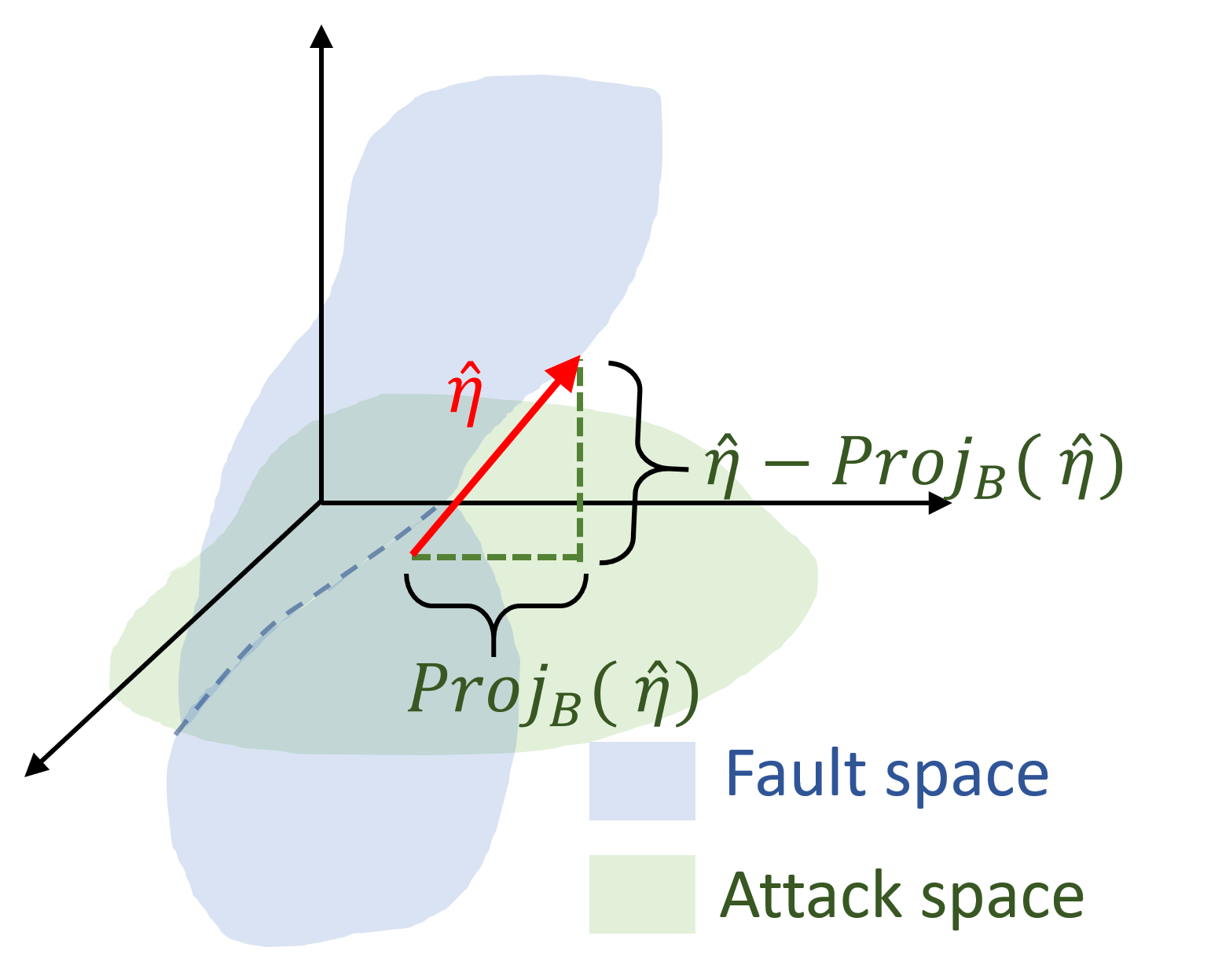}%
\label{fig:a_space}}
\caption{{Realization of Distinguishability Metric $\mathcal{M}$ using the distance from (a) fault space and (b) attack space}}
\label{fig:fault_attack_space}
\end{figure}

\begin{remm}\label{remm_proof}
We note here that for an arbitrarily accurate estimation of the unknown input $\eta$, we will have only two scenarios. Either there exists an $\alpha$ such that $\eta -B\alpha \equiv 0$ or an $f$ such that $\eta_1 - Ef \equiv 0$. This would have unambiguously proven that the unknown input $\eta$ is in fact a cyberattack in the first case and a fault for the second. However, such arbitrarily accurate estimation of unknown inputs is unrealistic not only from the point of detector design but also the presence of uncertainties in system model and measurements. Hence, function Eqn.~\eqref{M_x} is defined to obtain the degree of closeness of the unknown input to the space of plausible cyberattacks and faults. Fig.~\ref{fig:fault_attack_space} presents the geometric interpretation of the Distinguishability metric $\mathcal{M}$.
\end{remm}


The next theorem provides us with the  Distinguishability criterion .

\begin{thmm}[Distinguishability criterion]\label{DC_thm}
Let us consider the system Eqn.~\eqref{sys_eta} with non-zero unknown input $\eta(t)$ and sliding mode-based DF given by Eqn.~\eqref{SMDF_eq}. Let us also assume that the DF satisfy conditions provided in Proposition \ref{SMDF}. Then this estimated unknown input Eqn.~\eqref{eta_est} is distinguishable as either cyberattack or fault if  and only if
\begin{align}
    \mathcal{M}(\hat{\eta}) <0 & \implies \text{ cyberattack},\\
    \mathcal{M}(\hat{\eta}) >0 & \implies \text{ fault}.
\end{align}
Inversely, unknown inputs are  \textit{indistinguishable} as a fault or cyberattack if and only if $
   \mathcal{M}(\hat{\eta})= 0.$
\end{thmm}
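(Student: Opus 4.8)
The plan is to reduce the claim to elementary properties of orthogonal projectors, after first invoking Proposition~\ref{SMDF} to replace the estimate $\hat{\eta}$ by the true input $\eta$. Since the DF satisfies the hypotheses of Proposition~\ref{SMDF}, we have $\hat{\eta}\to\eta$ in finite time, so for $t>T_{max}$ the argument of $\mathcal{M}$ in Eqn.~\eqref{M_x} is exactly $\eta$, and by continuity of the norm $\mathcal{M}(\hat\eta)$ and $\mathcal{M}(\eta)$ agree on the sliding surface. It therefore suffices to analyze $\mathcal{M}(\eta)$. First I would identify $P_B := B(B^TB)^{-1}B^T$ and $P_E := E(E^TE)^{-1}E^T$ as the orthogonal projectors onto $\mathcal{R}(B)$ and $\mathcal{R}(E)$, so that $\mathbf{I}_n-P_B$ and $\mathbf{I}_n-P_E$ project onto the respective orthogonal complements. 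The one elementary fact I would record is that, for any vector $v$, $\|(\mathbf{I}_n-P_B)v\|_2=0$ iff $v\in\mathcal{R}(B)$, and likewise for $E$; this is what makes the two terms in Eqn.~\eqref{M_x} the genuine distances from $\eta$ to $\mathcal{R}(B)$ and to $\mathcal{R}(E)$.

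Next I would split on the two scenarios permitted by Assumption~\ref{single}. If $\eta$ is a cyberattack then $\eta=B\alpha\in\mathcal{R}(B)$, so the first term vanishes and $\mathcal{M}(\eta)=-\|(\mathbf{I}_n-P_E)\eta\|_2\le 0$; symmetrically, if $\eta$ is a fault then $\eta=Ef\in\mathcal{R}(E)$ and $\mathcal{M}(\eta)=\|(\mathbf{I}_n-P_B)\eta\|_2\ge 0$. These two one-sided bounds are the crux of the argument: a cyberattack can never produce a strictly positive metric, and a fault can never produce a strictly negative one.

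For the sufficiency direction I would combine these bounds to turn an observed sign into a unique label: $\mathcal{M}(\eta)<0$ rules out the fault case (which forces $\mathcal{M}\ge 0$), leaving only a cyberattack, and dually $\mathcal{M}(\eta)>0$ rules out the attack case and leaves a fault, which is exactly the stated classification. For the necessity (the ``only if'' and the equivalence with $\mathcal{M}=0$) I would argue through the boundary case: using the elementary fact above, $\mathcal{M}(\eta)=0$ in the attack case means $\eta\in\mathcal{R}(E)$ as well, and in the fault case means $\eta\in\mathcal{R}(B)$ as well, so in either case $\eta\in\mathcal{R}(B)\cap\mathcal{R}(E)$; the input then admits both an attack and a fault explanation, which is precisely the definition of indistinguishable anticipated in Remark~\ref{remm_proof}. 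Conversely, whenever $\eta\in\mathcal{R}(B)\cap\mathcal{R}(E)$ both projected residuals vanish and $\mathcal{M}(\eta)=0$, giving the stated iff.

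I expect the main obstacle to be the logical bookkeeping of the ``if and only if'' rather than any computation: one must use the two signed bounds (attack $\Rightarrow\mathcal{M}\le0$, fault $\Rightarrow\mathcal{M}\ge0$) in the correct direction to invert the observed sign of $\mathcal{M}$ into a unique classification, and must identify the degenerate case $\mathcal{M}=0$ with membership of $\eta$ in $\mathcal{R}(B)\cap\mathcal{R}(E)$. A secondary subtlety worth stating explicitly is the tacit nondegeneracy assumption $\mathcal{R}(B)\neq\mathcal{R}(E)$, without which every input would fall in the intersection and be indistinguishable, together with the restriction to $t>T_{max}$ that lets the finite-time convergence of Proposition~\ref{SMDF} justify evaluating the metric on $\eta$ rather than on the transient estimate.
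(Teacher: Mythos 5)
Your argument is correct and follows the route the paper intends: the paper's own proof consists of the single line ``follows from the definition in Eqn.~\eqref{M_x},'' and your writeup is precisely the expansion of that remark via the orthogonal-projector interpretation of the two terms, the two one-sided bounds (attack $\Rightarrow\mathcal{M}\le 0$, fault $\Rightarrow\mathcal{M}\ge 0$), and the identification of the degenerate case $\mathcal{M}=0$ with $\eta\in\mathcal{R}(B)\cap\mathcal{R}(E)$, which is consistent with the paper's corollary on fault-mimicking cyberattacks. Your explicit caveats (full column rank of $B$ and $E$ for the inverses in Eqn.~\eqref{M_x}, the restriction to $t>T_{max}$ so the metric is evaluated on the converged estimate, and the nondegeneracy $\mathcal{R}(B)\neq\mathcal{R}(E)$) are details the paper leaves tacit, not departures from its approach.
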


\begin{proof}
The proof follows from the definition in Eqn.~\eqref{M_x}

\end{proof}

\begin{Cor}[Fault-mimicking cyberattacks]
A cyberattack will be indistinguishable from a fault if there exists an $\alpha_{\star}\in \mathbb{R}^p$ such that $
    \alpha_{\star}= B^{\dagger}Ef,\, \text{ for some } f\in \mathbb{R}^{\Tilde{p}},
$
and such cyberattacks are called  fault-mimicking cyberattacks. 
\end{Cor}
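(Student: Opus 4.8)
The plan is to show that the prescribed cyberattack produces an unknown input lying simultaneously in $\mathcal{R}(B)$ and $\mathcal{R}(E)$, so that \emph{both} projection distances appearing in the Distinguishability metric vanish and hence $\mathcal{M}=0$, which by Theorem~\ref{DC_thm} is precisely the indistinguishability condition. The whole argument is therefore a reduction to the $\mathcal{M}(\hat\eta)=0$ clause of that theorem, followed by elementary projector algebra.

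First I would invoke Theorem~\ref{DC_thm}: an estimated input is indistinguishable exactly when $\mathcal{M}(\hat\eta)=0$. Working in the accurate-estimation regime of Proposition~\ref{SMDF} (so that $\hat\eta\to\eta$), it suffices to verify $\mathcal{M}(\eta)=0$ for the candidate input. For a cyberattack the unknown input is $\eta=B\alpha_{\star}$, so the first term of Eqn.~\eqref{M_x} is the distance of $B\alpha_{\star}$ from $\mathcal{R}(B)$. Since $\mathbf{I}_n - B(B^TB)^{-1}B^T$ is the orthogonal projector onto $\mathcal{R}(B)^{\perp}$, it annihilates any vector already in $\mathcal{R}(B)$; hence this term is identically zero and $\mathcal{M}(\eta)$ collapses to $-\big\|(\mathbf{I}_n - E(E^TE)^{-1}E^T)B\alpha_{\star}\big\|_2$, the negative distance of $\eta$ from $\mathcal{R}(E)$.

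Next I would substitute the fault-mimicking attack $\alpha_{\star}=B^{\dagger}Ef$ and evaluate $\eta=B\alpha_{\star}=BB^{\dagger}Ef$. With $B$ full column rank the generalized inverse satisfies $B^{\dagger}=(B^TB)^{-1}B^T$, so $BB^{\dagger}=B(B^TB)^{-1}B^T$ is the orthogonal projector onto $\mathcal{R}(B)$; consequently $BB^{\dagger}Ef=Ef$ precisely when $Ef\in\mathcal{R}(B)$, which is the consistency condition guaranteeing that $\alpha_{\star}$ is a genuine solution of $B\alpha=Ef$. Under this condition $\eta=Ef\in\mathcal{R}(E)$, so the surviving second term $\big\|(\mathbf{I}_n - E(E^TE)^{-1}E^T)Ef\big\|_2$ also vanishes, yielding $\mathcal{M}(\eta)=0$ and hence indistinguishability.

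The main obstacle I anticipate is clarifying the role of the pseudoinverse. The vector $\alpha_{\star}=B^{\dagger}Ef$ is always defined as the least-squares solution, but the decisive projector identity $BB^{\dagger}Ef=Ef$ holds only when $Ef$ lies in $\mathcal{R}(B)$. I would therefore make explicit that the existential phrasing ``there exists an $\alpha_{\star}=B^{\dagger}Ef$'' must be read as the requirement that this $\alpha_{\star}$ \emph{exactly} reproduces the fault signature, i.e. $Ef\in\mathcal{R}(B)$; otherwise $B\alpha_{\star}$ is merely the best in-range approximation of $Ef$, the first distance term need not vanish, and the attack stays distinguishable. Pinning down this hypothesis is the key step, after which the projector computation is routine.
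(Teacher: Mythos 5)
Your argument is correct, and in fact it is more than the paper offers: the corollary is stated without any proof at all (the preceding theorem's proof is likewise a one-line appeal to the definition of $\mathcal{M}$), so there is no ``paper route'' to compare against. Your reduction --- $\eta=B\alpha_{\star}\in\mathcal{R}(B)$ kills the first term of Eqn.~\eqref{M_x}, and indistinguishability then hinges entirely on whether the surviving term $\bigl\|\bigl(\mathbf{I}_n-E(E^TE)^{-1}E^T\bigr)B\alpha_{\star}\bigr\|_2$ vanishes --- is exactly the right way to connect the corollary to Theorem~\ref{DC_thm}. More importantly, the obstacle you flag is a genuine defect in the statement, not just a presentational quibble: as written, ``there exists $\alpha_{\star}=B^{\dagger}Ef$'' is vacuously satisfiable for every $f$, yet $B\alpha_{\star}=BB^{\dagger}Ef$ is only the orthogonal projection of $Ef$ onto $\mathcal{R}(B)$, so without the consistency condition $Ef\in\mathcal{R}(B)$ one generically gets $\mathcal{M}(\eta)<0$ (attack detected as attack), not $\mathcal{M}(\eta)=0$. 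The paper's own simulation data make this concrete: with $B=[3,2]^T$ and $E=[2,5]^T$ the ranges intersect only at the origin, so no nonzero fault is mimickable there and the corollary's hypothesis can never hold nontrivially. Your reading --- that the existential clause must be interpreted as $B\alpha_{\star}=Ef$, equivalently $Ef\in\mathcal{R}(B)\cap\mathcal{R}(E)$ --- is the correct repair, and with it your projector computation closes the proof.
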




\section{Simulation results}
In this section, we illustrate the proposed concepts using simulation studies. The system considered is given as follows: $
    A = \begin{bmatrix}
    -30 & 0\\ 0 & -20
    \end{bmatrix}, \, B = \begin{bmatrix} 3 \\ 2\end{bmatrix}, \, D = 0,\, \text{ and } E = \begin{bmatrix} 2 \\ 5\end{bmatrix}.
$
The unknown input is $\eta = [\eta_1, \eta_2]^T.$ For all the case studies in this section, a dynamic input profile is given as input $u$ to the system Eqn.~\eqref{sys_eta} (shown in Fig.~\ref{fig:input}).The observer gain for the sliding mode-based DF Eqn.~\eqref{SMDF_eq} is chosen to be $L=50\mathbf{I}_2$ and the filter gain as $\tau=0.1$. 
\begin{figure}[h!]
    \centering
    \includegraphics[width=0.4\textwidth]{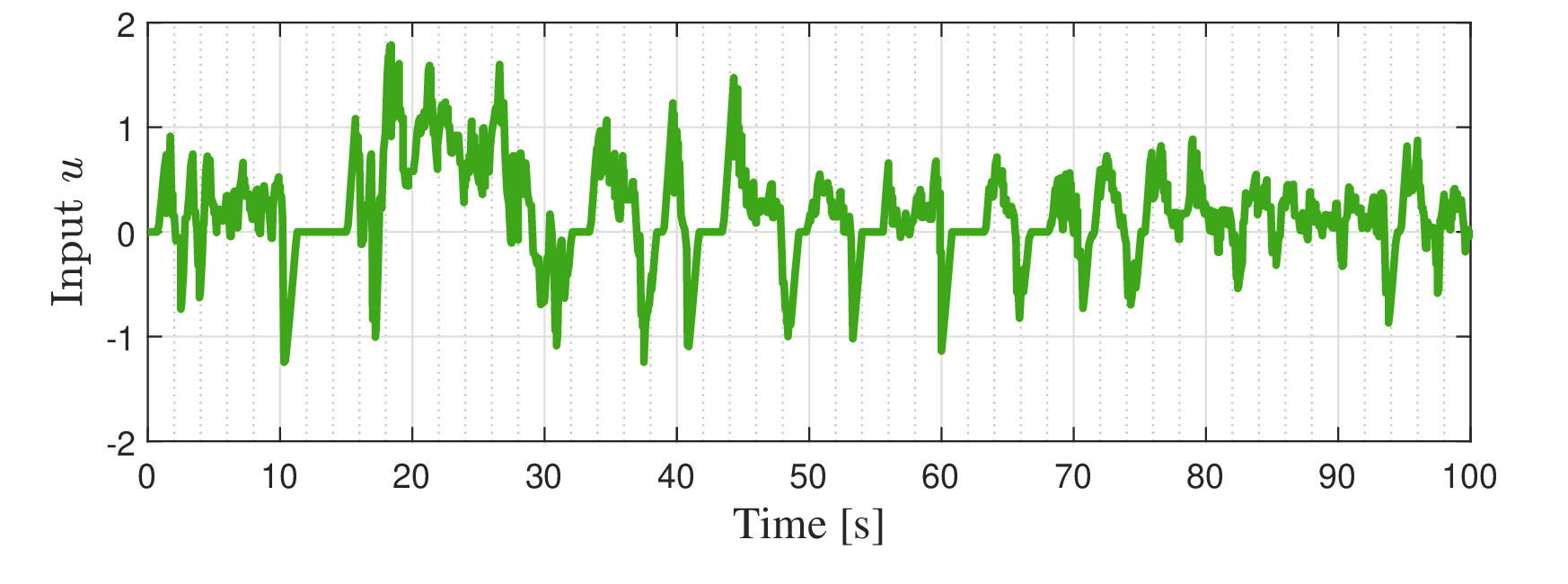}
    \caption{Dynamic input signal $u$ to the system Eqn.~\eqref{sys_eta}}
    \label{fig:input}
\end{figure}

In this simulation study we consider two cases. For Case 1, the system is under a physical fault. While for Case 2, the system is subjected to a cyberattack. The objective of this case study is to show how the unknown input estimated by the sliding mode-based DF can be successfully identified as either fault or cyberattack using the Distinguishability metric and criterion proposed in Eqn.~\eqref{M_x} and Theorem \ref{DC_thm}. With this, let us look at the results of the two case studies.

\subsection{Case 1: Fault}

For this case, the system is subject to a fault of magnitude $f= 5(1-exp(-10^{-4}t))$ and it is manifested to the system as an unknown input $\eta=Ef$. Fig.~\ref{fig:est_fault} shows that the sliding mode-based DF can estimate this unknown input while starting from arbitrary initial conditions (shown in inset). The estimates of the two components of $\hat{\eta}=[\hat{\eta}_1, \hat{\eta}2]^T$ matches with the true components of the unknown input $\eta=[\eta_1, \eta_2]^T$ and is shown in Fig.~\ref{fig:est_fault}.
\begin{figure}[h!]
    \centering
    \includegraphics[width=0.5\textwidth]{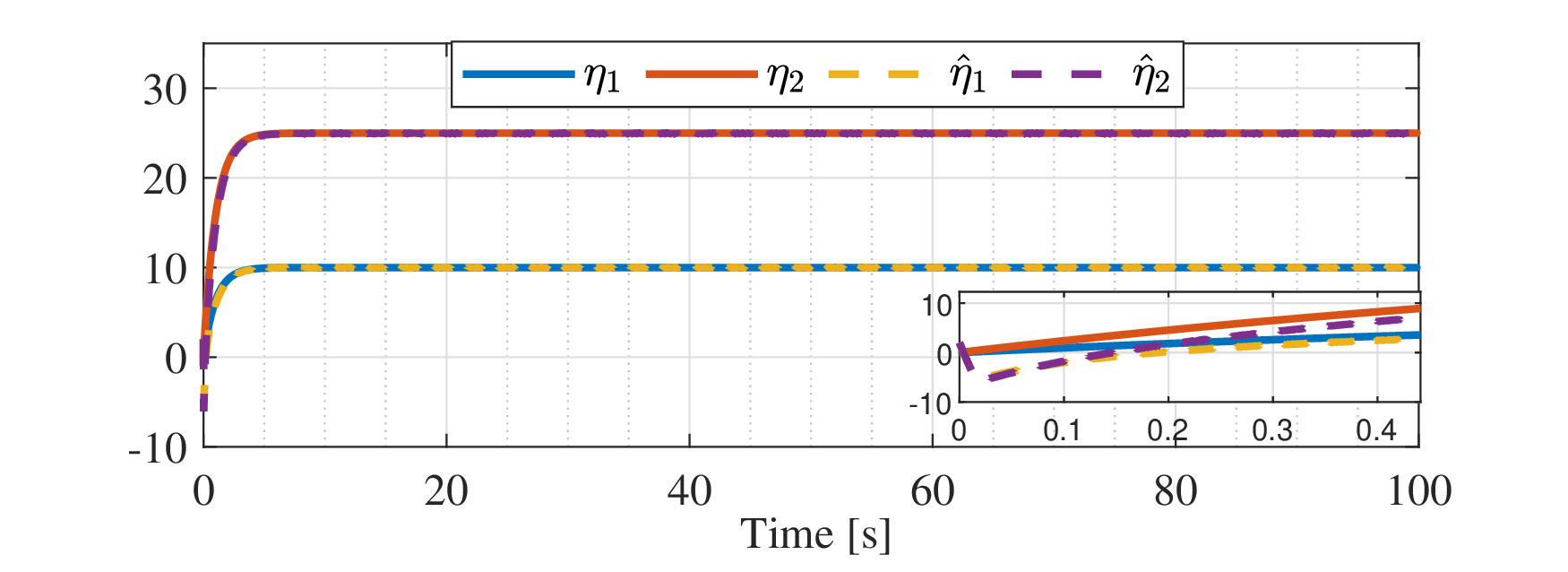}
    \caption{Fault estimated by sliding mode-based DF}
    \label{fig:est_fault}
\end{figure}

The DF also estimates the states of system $\hat{x}_1$ and $\hat{x}_2$ and these estimates matches significantly with the true states $x_1$ and $x_2$. 
We observe in Fig.~\ref{fig:metric_fault}  that $\mathcal{M}>0$  in steady state, indicating that the unknown input is a fault. We also note here that the Distinguishability metric is non-positive for the first 0.2s (as seen in the inset of Fig.~\ref{fig:metric_fault}). However, this is due to the time needed by the DF in order to converge to the correct estimates of the states and unknown inputs to the system.




\begin{figure}[h!]
    \centering
    \includegraphics[width=0.45\textwidth]{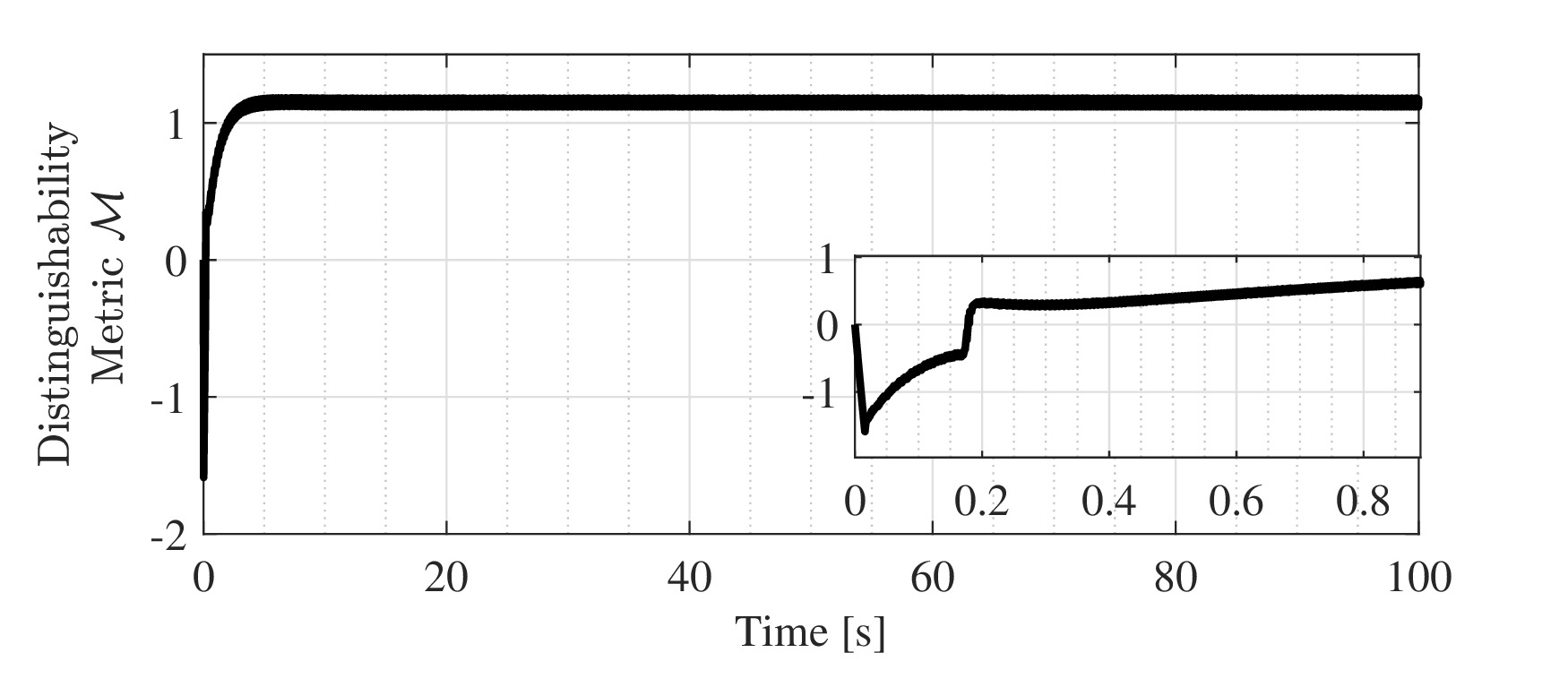}
    \caption{Positive Distinguishability metric $\mathcal{M}$ denotes the estimated unknown input as fault}
    \label{fig:metric_fault}
\end{figure}

\subsection{Case 2: Cyberattack}
For this case study, we construct a cyberattack $\alpha$ that can drive the system states to some unintended states $[1,1]$. Using steady state condition of the system equation Eqn.~\eqref{sys_eta}, we obtain the attack policy to be
$
    \alpha = -B^{\dagger}A[1,1]^T-u.
$
From Fig.~\ref{fig:est_attack}, it is evident that the sliding mode-based DF can faithfully estimate the two components of the unknown input $\eta = [\eta_1, \eta_2]^T=B\alpha.$ The initial condition for the unknown inputs are unspecified. Hence, the estimates are initialized arbitrarily. However, the DF converges to correct estimates in the steady state starting from the arbitrary initial conditions (as seen in the inset of Fig.~\ref{fig:est_attack}).  
Subsequently, we calculate the Distinguishability Metric $\mathcal{M}$ from Eqn.~\eqref{M_x} and plot in Fig.~\ref{fig:metric_attack}. Since $\mathcal{M}<0$, we can conclude from the Distinguishability criterion that the unknown input $\eta$ is a cyberattack.

\begin{figure}[h!]
    \centering
    \includegraphics[width=0.5\textwidth]{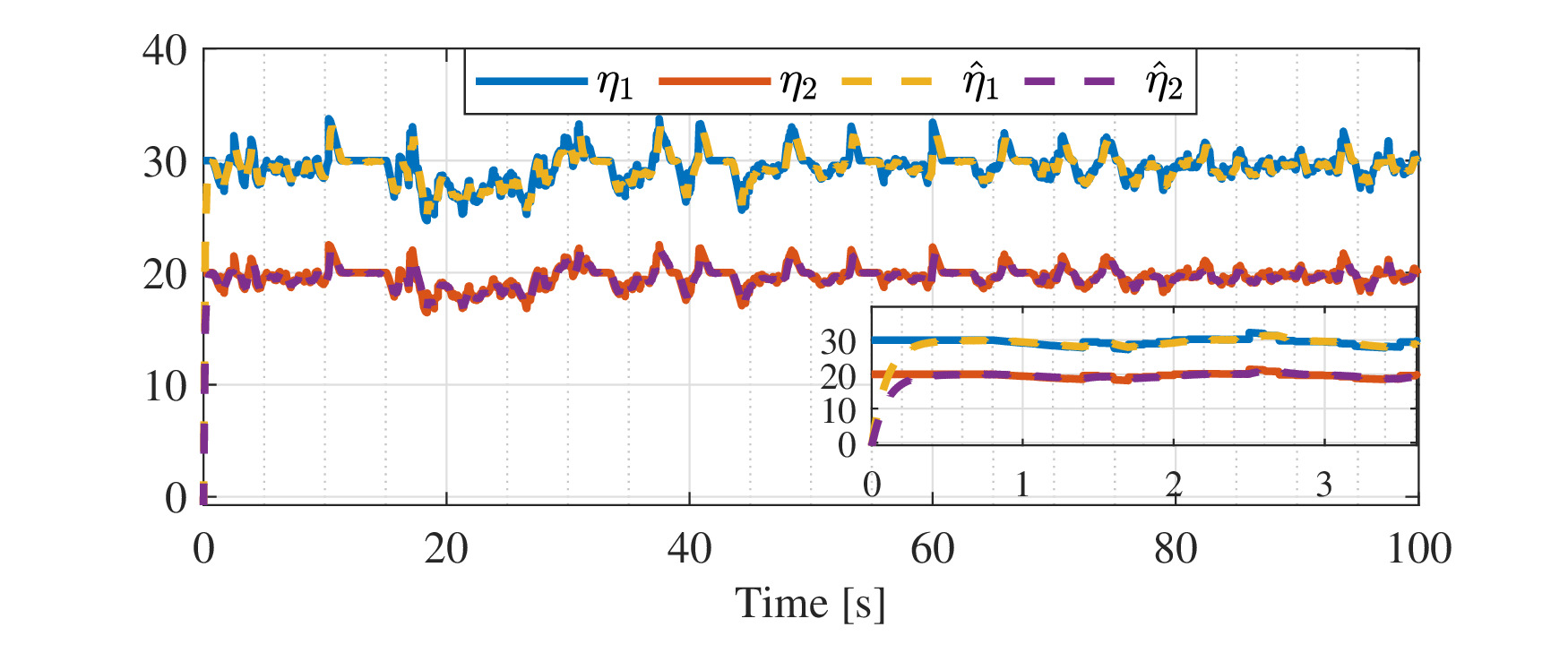}
    \caption{Cyberattack estimated by sliding mode-based DF}
    \label{fig:est_attack}
\end{figure}\vspace*{-1cm}
\begin{figure}[h!]
    \centering
    \includegraphics[width=0.45\textwidth]{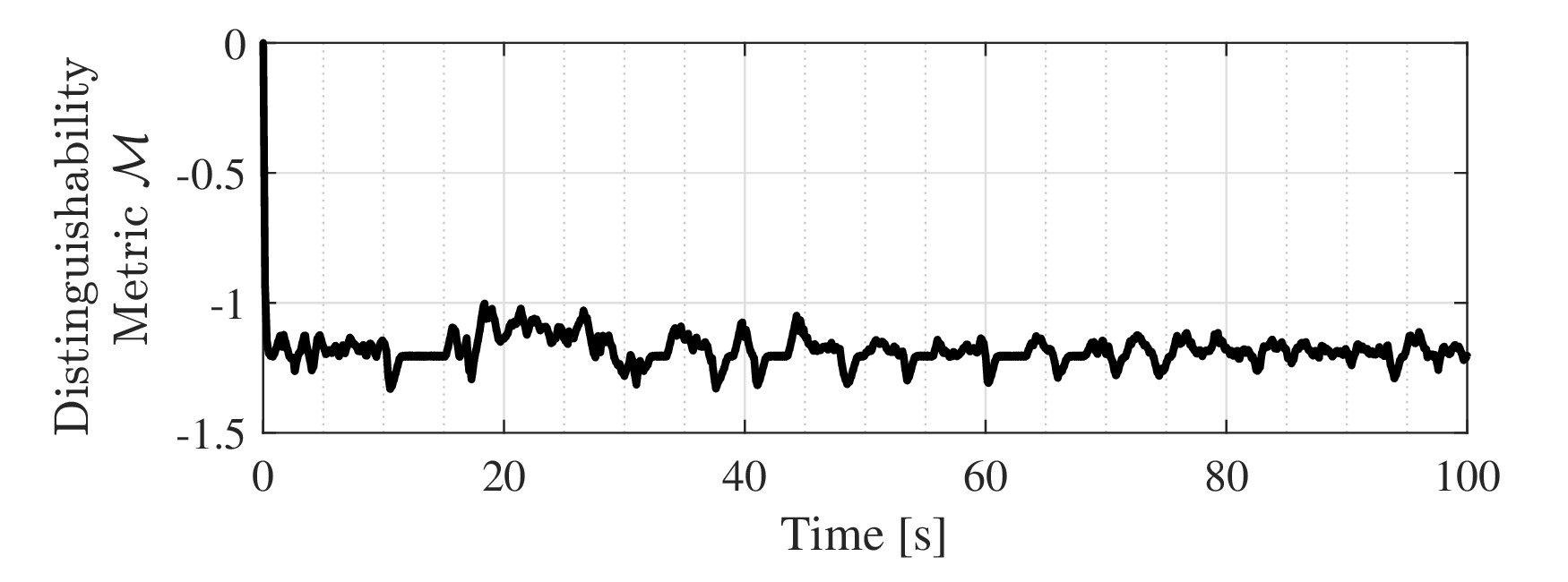}
    \caption{Negative Distinguishability metric $\mathcal{M}$ denotes the estimated unknown input as cyberattack}
    \label{fig:metric_attack}
\end{figure}

\section{Conclusion}
Distinguishing between the occurrence of faults and cyberattacks in system are of utmost importance in order to provide appropriate mitigation strategies. In this work, we have proposed a mathematical framework for distinguishing between the two, utilizing a sliding mode-based Diagnostic Filter (DF). Using this estimate for the unknown input, we proposed a Distinguishability metric and criterion in order to achieve our goal. Finally, we have conducted a set of simulation studies with both faults and cyberattacks to illustrate the validity of our proposed framework.

\bibliographystyle{asmems4}

\bibliography{ref.bib}



\end{document}